\title{SMS in PACE 2020} 
\titlerunning{SMS in PACE 2020} 
\author{Tuukka Korhonen}{Department of Computer Science, University of Helsinki, Finland \and \url{https://tuukkakorhonen.com} }{tuukka.m.korhonen@helsinki.fi}{}{}{}
\authorrunning{Tuukka Korhonen} 
\keywords{Treedepth, PACE 2020, SMS, Minimal separators} 
\newcommand{\MS}{\Delta}
\newcommand{\CC}{\mathcal{C}}
\newcommand{\td}{\texttt{\textbf{td}}}
\begin{document}

\maketitle

\begin{abstract}
We describe SMS, our submission to the exact treedepth track of PACE 2020.
SMS computes the treedepth of a graph by branching on the \textbf{S}mall \textbf{M}inimal \textbf{S}eparators of the graph.
\end{abstract}

\section{Overview}
SMS is an exact algorithm implementation for computing treedepth, available in~\cite{tuukka_korhonen_2020_3872898} and in \url{https://github.com/Laakeri/pace2020-treedepth-exact}.
SMS was developed for the fifth PACE challenge (PACE 2020).
The main algorithm implemented in SMS is a recursive procedure that branches on minimal separators~\cite{DBLP:journals/dam/DeogunKKM99}.
Two variants of the branching algorithm are implemented, one with a heuristic algorithm for enumerating minimal separators and one with an exact algorithm~\cite{DBLP:conf/sea2/Tamaki19}.
Several lower bound techniques are implemented within the branching algorithm.
Before applying the branching algorithm, preprocessing techniques are applied and a heuristic upper bound for treedepth is computed.

This arXiv version contains an appendix containing proofs for novel techniques used.

\section{Notation}
Let $G$ be a graph with vertex set $V(G)$ and edge set $E(G)$.
The graph $G[X]$ is the induced subgraph of $G$ with vertex set $X$.
The set $N(v)$ is the neighborhood of a vertex $v$ and $N(X)$ is the neighborhood of a vertex set $X$.
The treedepth of $G$ is denoted by $\td(G)$.
A minimal $a,b$-separator of $G$ is a subset-minimal vertex set $S$ such that the vertices $a$ and $b$ are in different connected components of $G[V(G) \setminus S]$.
The set of minimal separators of $G$ for all pairs $a,b \in V(G)$ is denoted by $\MS(G)$ and the set of minimal separators with size at most $k$ by $\MS_k(G)$.
The set of vertex sets of connected components of $G$ is denoted by $\CC(G)$.

\section{The Algorithm}
\subsection{Branching}
SMS is based on the following characterization of treedepth.

\begin{proposition}[\cite{DBLP:journals/dam/DeogunKKM99}]
\label{pro:td_rec}
Let $G$ be a graph.
If $G$ is a clique then $\td(G) = |V(G)|$.
Otherwise $$\td(G) = \min_{S \in \MS(G)} \left( |S| + \max_{C \in \CC(G[V(G) \setminus S])} \td(G[C]) \right).$$
\end{proposition}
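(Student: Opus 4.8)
The plan is to work with the elimination-tree (equivalently, vertex-ranking) characterization of treedepth: $\td(G)$ is the minimum height of a rooted forest on $V(G)$ in which every edge of $G$ joins a vertex to one of its ancestors. The clique case is immediate, since a clique forces all vertices onto a single root-to-leaf path; note also that a clique has no separators, which is why the displayed formula is stated only for non-cliques. For a non-clique $G$ I would prove the two inequalities separately, abbreviating $h = \td(G)$ and $\mathrm{val}(S) = |S| + \max_{C \in \CC(G[V(G)\setminus S])} \td(G[C])$.

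For the upper bound $\td(G) \le \min_{S} \mathrm{val}(S)$, fix any $S \in \MS(G)$. Since such an $S$ disconnects $G$, I build an elimination forest by placing the vertices of $S$ in arbitrary order along one path occupying the top $|S|$ levels, and hanging an optimal elimination forest of each $G[C]$, $C \in \CC(G[V(G)\setminus S])$, below the bottom of this path. Every edge inside $S$, between $S$ and a component, or inside a component is then an ancestor-descendant edge, and there are no edges between distinct components; the height is exactly $\mathrm{val}(S)$. Minimizing over $S$ gives the bound, which in fact holds for every separator, not only minimal ones.

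For the lower bound it suffices to exhibit one $S \in \MS(G)$ with $\mathrm{val}(S) \le h$. Assuming $G$ connected, take an optimal elimination tree $T$ that is \emph{nice}, meaning that every subtree induces a connected subgraph and that the children-subtrees of each vertex $v$ are exactly the components of $G[V_v]-v$, where $V_v$ is the vertex set of the subtree rooted at $v$; such a $T$ exists and has height $h$. Because $G$ is not a clique we have $h < |V(G)|$, so $T$ is not a path and must branch; let $A = \{p_1,\dots,p_t\}$ be the vertices on the path from the root to the first branching vertex $p_t$. Niceness gives $\CC(G[V(G)\setminus A]) = \{C_1,\dots,C_m\}$ with $m \ge 2$, each $C_i$ a subtree rooted at depth $t+1$, so $\td(G[C_i]) \le h-t$ and hence $\mathrm{val}(A) \le t + (h-t) = h$.

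The main obstacle is that $A$ need not be a \emph{minimal} separator, while the minimum in the statement ranges over $\MS(G)$ only, so $A$ cannot be used directly. I resolve this by passing to a minimal separator $S \subseteq A$ (one exists, e.g.\ a minimal $a,b$-separator inside $A$ for $a \in C_1$, $b \in C_2$) and checking that the value does not grow. Writing $j = |A|-|S|$, I invoke the standard monotonicity $\td(H) \le \td(H-X) + |X|$: after deleting the $j$ vertices of $A \setminus S$, each component $C'$ of $G[V(G)\setminus S]$ becomes an induced subgraph of $G[V(G)\setminus A]$ and so has treedepth at most $h-t$, whence $\td(G[C']) \le (h-t)+j$. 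Therefore $\mathrm{val}(S) \le (t-j) + ((h-t)+j) = h$, and combining with the upper bound yields $\td(G) = \mathrm{val}(S) = \min_{S' \in \MS(G)} \mathrm{val}(S')$. The disconnected case follows at once by taking $S = \emptyset \in \MS(G)$.
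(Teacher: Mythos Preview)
The paper does not supply a proof of this proposition; it is quoted as a known result from~\cite{DBLP:journals/dam/DeogunKKM99} and used as the basis of the branching algorithm, so there is no in-paper argument to compare against.

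Your argument is nonetheless correct. The upper bound is standard, and your lower bound via a nice elimination tree is sound: the path $A$ from the root to the first branching vertex is a separator with $\mathrm{val}(A)\le h$, and your fix for the possible non-minimality of $A$ is the right one. When you pass to a minimal $S\subseteq A$ and apply $\td(H)\le\td(H\setminus X)+|X|$ with $X=(A\setminus S)\cap C'$, note that the relevant count is $j_{C'}:=|X|\le j$, giving $\td(G[C'])\le(h-t)+j_{C'}$ and hence $\mathrm{val}(S)\le(t-j)+(h-t)+\max_{C'}j_{C'}\le h$; your inequality $\td(G[C'])\le(h-t)+j$ is a harmless overcount that still yields the same conclusion. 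The only unproved assertions are that $G$ not a clique implies $\td(G)<|V(G)|$ (which follows at once from your upper bound applied to any minimal separator) and the existence of a nice optimal elimination tree, both of which are routine.
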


Proposition~\ref{pro:td_rec} is implemented as a recursive algorithm that takes a vertex set $X$ as input and computes $\td(G[X])$ by first enumerating the minimal separators of $G[X]$ and then branching from each minimal separator $S$ to smaller induced subgraphs $G[C]$ for each component $C \in \CC(G[X \setminus S])$.
We make use of upper bounds by implementing Proposition~\ref{pro:td_rec} as a decision procedure which, given a vertex set $X$ and a number $k$, decides if $\td(G[X]) \le k$.
Clearly, in this case we may consider only the minimal separators in $\MS_{k-1}(G[X])$.
Moreover, we handle the minimal separators with sizes $k-1$ and $k-2$ as special cases and thus consider only the minimal separators in $\MS_{k-3}(G[X])$ in the main recursion.
A minimal separator $S$ with $|S| = k-1$ such that $\td(G[X \setminus S]) = 1$ must be a vertex cover of $G[X]$ and therefore is a neighborhood of a vertex.
A minimal separator $S$ with $|S| = k-2$ such that $\td(G[X \setminus S]) \le 2$ has also a somewhat special structure, and we handle them with a modification of Berry's algorithm~\cite{DBLP:journals/ijfcs/BerryBC00} for enumerating minimal separators.

\subsection{Enumerating Small Minimal Separators}
SMS spends most of its runtime in a subroutine which given a number $k$ and a graph $G$ enumerates $\MS_k(G)$.
To make use of the fact that heuristic enumeration of small minimal separators is more efficient than exact enumeration, two variants of the main branching algorithm are ran: first a variant using a heuristic minimal separator enumeration algorithm and then a variant using an exact minimal separator enumeration algorithm.

The heuristic enumeration algorithm is a simple modification of Berry's algorithm~\cite{DBLP:journals/ijfcs/BerryBC00}.
The modification prunes all minimal separators with more than $k$ vertices immediately during the execution, outputting a set $\MS'_k \subseteq \MS_k(G)$ in $O(|\MS'_k| n^3)$ time.
As observed in~\cite{DBLP:conf/sea2/Tamaki19}, there are cases in which $\MS'_k \neq \MS_k(G)$.
However, in practice the algorithm seems to often find all small minimal separators on the values of $k$ that are relevant.

As an exact small minimal separator enumeration algorithm we implement the algorithm of Tamaki~\cite{DBLP:conf/sea2/Tamaki19}, including also the optimizations discussed in the paper.
To the best of our knowledge there are no better bounds than $n^{k+O(1)}$ for the runtime of this algorithm.
In practice it appears to usually have only a factor of 2-10 runtime overhead compared to the heuristic algorithm.

In cases when $G[C]$ is a child of $G$ in the recursion, obtained by branching on a minimal separator $N(C) \in \MS(G)$, and $|C| > |V(G)|/2$ we make use of the small minimal separators of $G$ to enumerate the small minimal separators of $G[C]$.
In particular, for all minimal separators $S \in \MS_k(G[C])$, there exists a minimal separator $S' \in \MS_{k+|N(C)|}(G)$ such that $S = C \cap S'$.
Note that in this case $|N(C)|$ is exactly the difference in the values of $k$ in recursive calls on $G[C]$ and $G$, and therefore $\MS_{k+|N(C)|}(G)$ is already enumerated.

\subsection{Lower Bounds}
To avoid unnecessary re-computation, the known upper and lower bounds for $\td(G[X])$ are stored for each handled induced subgraph $G[X]$.
To this end, an open addressing hashtable with linear probing is implemented.
Also, we implement an ad-hoc data structure so that given a vertex set $X$, a vertex set $X' \subset X$ with the highest known lower bound for $\td(G[X'])$ can be found.
This data structure uses the idea of computing subset-preserving hashes by using the intersection $X \cap V'$, where $V'$ is a subset of vertices with size $O(\log n)$, where $n$ is the number of elements in the data structure.
Other implemented algorithms for computing lower bounds on $\td(G[X])$ are the MMD+ algorithm~\cite{DBLP:journals/iandc/BodlaenderK11} which finds large clique minors, a depth-first search algorithm which finds long paths and cycles, and a graph isomorphism hashtable which finds already processed induced subgraphs $G[X']$ that are isomorphic to $G[X]$ and applies the lower bounds of $G[X']$ to $G[X]$.

\subsection{Preprocessing Techniques}
The preprocessing techniques implemented in SMS are \emph{tree elimination} and the kernelization procedures described in~\cite{DBLP:conf/iwpec/KobayashiT16}.
Tree elimination finds a subgraph $G[T]$ such that $G[T]$ is a tree and $|N(V(G) \setminus T)| = 1$, i.e., the subgraph is attached to the rest of the graph only on a single vertex.
Then it uses an exact algorithm to compute a list of length $\td(G[T])$ that characterizes the behavior of $G[T]$ with respect to treedepth of $G$~\cite{DBLP:journals/ipl/Schaffer89}, and replaces $G[T]$ with a construction of $O(\td(G[T])^2)$ vertices whose behavior is the same.
The simplicial vertex kernelization rule from~\cite{DBLP:conf/iwpec/KobayashiT16} is implemented as it is described there, but the shared neighborhood rule is generalized.
In particular, if there are two non-adjacent vertices $u,v \in V(G)$, and the minimum $u,v$-vertex cut is at least $k$, where $k$ is an upper bound for treedepth, then an edge can be added between $u$ and $v$.

\subsection{Upper Bounds}
To compute upper bounds on treedepth we implement a novel heuristic algorithm.
The algorithm first finds a triangulation (chordal completion) $H$ of $G$ using the LB-Triang algorithm~\cite{DBLP:journals/jal/BerryBHSV06} with a heuristic aiming to minimize the number of fill-edges in each step.
Then it uses the branching algorithm, with some additional heuristics making it non-exact, to compute a treedepth decomposition of $H$.
Any treedepth decomposition of $H$ is also a treedepth decomposition of $G$.
The properties of chordal graphs interplay nicely with the branching algorithm: chordal graphs have a linear number of minimal separators and the treewidth of a chordal graph can be computed in linear time~\cite{DBLP:conf/wg/GalinierHP95}.
Moreover, there exists a triangulation $H$ of $G$ with $\td(H) = \td(G)$, because treedepth can be formulated as a completion problem to a graph class that is a subset of chordal graphs~\cite{DBLP:journals/dam/DeogunKKM99}.



\bibliography{paper}

\appendix

\section{Details and Proofs}
For a vertex set $X \subseteq V(G)$ let $G \setminus X = G[V(G) \setminus X]$.
For a vertex $v \in V(G)$ let $N[v] = N(v) \cup \{v\}$.

\begin{lemma}[\cite{DBLP:journals/ijfcs/BerryBC00}]
A minimal separator $S \in \MS(G)$ has at least two full components, i.e., components $C \in \CC(G \setminus S)$ with $S = N(C)$.
\end{lemma}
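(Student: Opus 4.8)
The plan is to exploit the definition of a minimal separator directly: every $S \in \MS(G)$ is, by definition, a minimal $a,b$-separator for some pair $a,b \in V(G)$. First I would fix such a witnessing pair $a,b$ and let $C_a, C_b \in \CC(G \setminus S)$ be the connected components containing $a$ and $b$ respectively; these are distinct because $S$ separates $a$ from $b$. The claim will then follow by showing that \emph{both} $C_a$ and $C_b$ are full components, i.e.\ $N(C_a) = S$ and $N(C_b) = S$, which immediately yields at least two full components.

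The heart of the argument is to prove $N(C_a) = S$ by contradiction against the minimality of $S$. Since $C_a$ is a component of $G \setminus S$ we always have $N(C_a) \subseteq S$, so suppose for contradiction that $N(C_a) \neq S$ and pick a vertex $s \in S \setminus N(C_a)$. I would then consider the smaller set $S' = S \setminus \{s\}$ and argue that $S'$ is still an $a,b$-separator, contradicting the subset-minimality of $S$. The symmetric argument gives $N(C_b) = S$, so $C_a$ and $C_b$ are two distinct full components.

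The step I expect to require the most care is verifying that $S'$ genuinely separates $a$ from $b$, i.e.\ that re-adding $s$ to the graph does not create a path from $a$ to $b$. The key observation is that $s \notin N(C_a)$ means $s$ has no neighbor in $C_a$, so in $G \setminus S'$ the component of $a$ is still exactly $C_a$: the single reinserted vertex $s$ cannot attach to $C_a$, and the components lying inside $V(G) \setminus S$ are otherwise unchanged. Any hypothetical $a,b$-path in $G \setminus S'$ would have to use $s$ (since deleting $s$ recovers $G \setminus S$, where $a$ and $b$ are separated), but such a path could never reach $C_a$ through $s$. Hence $a$ and $b$ remain in different components of $G \setminus S'$, giving the desired contradiction. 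Making this connectivity argument airtight — rather than hand-waving that ``adding back $s$ does nothing near $a$'' — is the main obstacle, and I would phrase it precisely in terms of $s$ having no neighbor in $C_a$.
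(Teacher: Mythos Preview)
Your argument is correct and is exactly the standard proof of this classical fact. Note, however, that the paper does not actually prove this lemma: it is stated with a citation to Berry, Bordat, and Cogis and used as a black box, so there is no proof in the paper to compare against. The minimality-based contradiction you outline --- removing a vertex $s \in S \setminus N(C_a)$ and observing that, since $s$ has no neighbour in $C_a$, the component of $a$ in $G \setminus (S \setminus \{s\})$ is still $C_a$, whence $S \setminus \{s\}$ still separates $a$ from $b$ --- is precisely the textbook argument, and the connectivity step you flag as needing care is handled correctly.
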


\subsection{Minimal Separators with $k-1$ Vertices}
In the branching algorithm, minimal separators with $k-1$ vertices are handled as special cases.

\begin{proposition}
\label{pro:sc1}
Let $S$ be a minimal separator of a graph $G$ such that $\td(G \setminus S) \le 1$. 
It holds that $S = N(v)$ for a vertex $v \in V(G)$.
\end{proposition}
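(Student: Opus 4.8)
The plan is to combine two ingredients: the observation that a treedepth bound of $1$ forces an edgeless graph, and the full-component structure of minimal separators supplied by the Lemma above. The whole argument is short, so the emphasis is on getting the characterization of $\td \le 1$ right and then reading off the conclusion from the definition of a full component.

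First I would unpack the hypothesis $\td(G \setminus S) \le 1$. Using the recursive definition of treedepth—where the empty graph has treedepth $0$, a single vertex has treedepth $1$, and a disconnected graph takes the maximum of the treedepths of its components—one sees that any graph containing an edge $uv$ has treedepth at least $2$, since the two-vertex connected subgraph on $\{u,v\}$ already has treedepth $2$. Hence $\td(G \setminus S) \le 1$ forces $G \setminus S$ to be edgeless, i.e.\ $V(G) \setminus S$ is an independent set of $G$. In particular every connected component $C \in \CC(G \setminus S)$ is a single vertex.

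Next I would invoke the Lemma, which guarantees that the minimal separator $S$ has a full component, that is, a component $C \in \CC(G \setminus S)$ with $N(C) = S$. By the previous step this component must be a singleton $C = \{v\}$ for some $v \in V(G) \setminus S$. For a singleton component we have $N(\{v\}) = N(v)$, so the equation $N(C) = S$ becomes $N(v) = S$, which is exactly the claimed statement. (The Lemma in fact gives at least two such singletons, though a single full component already suffices here.)

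The only real subtlety, and the step I would be most careful about, is the characterization of $\td(G \setminus S) \le 1$ as independence; everything afterward is a direct application of the Lemma. I therefore expect no genuine obstacle beyond pinning down the base cases of the treedepth recursion, after which the structural fact that full components are singletons makes the conclusion $S = N(v)$ immediate.
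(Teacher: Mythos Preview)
Your proposal is correct and follows essentially the same approach as the paper's proof: both argue that $\td \le 1$ forces $G \setminus S$ to be an independent set, so the full components guaranteed by the Lemma are singletons $\{v\}$ with $N(v)=S$. Your write-up is simply more explicit about the final step from ``full components are single vertices'' to $S=N(v)$, which the paper leaves implicit.
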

\begin{proof}
The set $V(G) \setminus S$ is not empty, so $\td(G \setminus S) = 1$.
A graph with treedepth $1$ is an independent set, so the full components of $S$ are single vertices.
\end{proof}

By Proposition~\ref{pro:sc1}, it can be decided in polynomial time if there is a minimal separator of size $k-1$ that can be used to obtain a treedepth decomposition of depth $k$.

\subsection{Minimal Separators with $k-2$ Vertices}
Handling minimal separators with size $k-2$ is more complicated.
A minimal separator $S$ with $|S| = k-2$ can be used to obtain a treedepth decomposition of depth $k$ if and only if each connected component of $G \setminus S$ is a star.
To find such minimal separators, we find for each vertex $a \in V(G)$ the minimal separators $S$ such that $a$ is in a full component $C_a$ of $S$ and $C_a$ is a star with $C_a \subseteq N[a]$.
The following proposition can be adapted from~\cite{DBLP:journals/ijfcs/BerryBC00}.

\begin{proposition}[\cite{DBLP:journals/ijfcs/BerryBC00}]
\label{pro:berry_gen}
Let $S$ be a minimal separator of a graph $G$ and $C_a$ a full component of $S$ containing $a$.
The separator $S$ is either a separator close to $a$, obtained as $S = N(C)$ for a component $C \in \CC(G \setminus N[a])$, or can generated from a separator $S'$ with a full component $C'_a \subset C_a$ with $a \in C'_a$ by $S = N(C)$ where $C \in \CC(G \setminus (S' \cup N(v)))$ for a vertex $v \in S'$.
\end{proposition}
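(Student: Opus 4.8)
The plan is to split on whether $S$ is a close separator of $a$ and, in the remaining case, to reverse one expansion step. First I would record the characterization that $S$ is close to $a$ exactly when $S \subseteq N(a)$. If $S = N(C)$ for some $C \in \CC(G \setminus N[a])$, then every vertex of $N(C)$ lies in $N[a]$ and $a \notin N(C) = S$, so $S \subseteq N(a)$. Conversely, if $S \subseteq N(a)$, then since $N[a] \subseteq C_a \cup S$, deleting $N[a]$ removes all of $S$ and touches only $C_a$ among the components of $G \setminus S$; hence a second full component $C_b$ of $S$ (which exists by the preceding lemma) survives untouched as a component of $G \setminus N[a]$ with $N(C_b) = S$, so $S$ is close. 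Thus if $S$ is not close there is a separator vertex $w \in S \setminus N(a)$, and this is the case I must treat.

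For the non-close case the target is a minimal separator $S' = N(C'_a)$ with $C'_a \subsetneq C_a$ a full component containing $a$, together with $v \in S'$, such that $S = N(C)$ for some $C \in \CC(G \setminus (S' \cup N(v)))$. I would aim to take $C = C_b$, so it suffices to arrange two conditions: (i) $C_b \cap (S' \cup N(v)) = \emptyset$, so that $C_b$ survives as a full component, and (ii) $S \subseteq S' \cup N(v)$, so that $C_b$ is cut off exactly along $S$ and therefore $N(C_b) = S$. Condition (i) comes for free once $v$ is chosen in the inner boundary $N(C'_a) \cap C_a$: such a $v$ lies in $C_a$ and hence is non-adjacent to $C_b$, while $N(C'_a) \cap C_a$ is nonempty because $C_a$ is connected and $C'_a$ is a proper subset. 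Since $C'_a \subseteq C_a$ we also have $S' \subseteq C_a \cup S$, which is disjoint from $C_b$, so $C_b \cap S' = \emptyset$ automatically.

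To produce $C'_a$ I would retract $C_a$ toward $a$ and away from $w$ — for instance take $C'_a$ to be the component of $a$ in $G[C_a \setminus (N(w) \cap C_a)]$, so that $w$ is no longer adjacent to $C'_a$ — and then pick the boundary vertex $v$ to be a neighbor of $w$ inside $C_a$ that is adjacent to $C'_a$. For the covering (ii), the vertex $w$ is handled for free because $v \in N(w)$ puts $w \in N(v)$; a separator vertex $s \neq w$ is covered precisely when it is adjacent to $C'_a$ (then $s \in S'$) or adjacent to $v$ (then $s \in N(v)$). In the cleanest case, a single-vertex retraction $C'_a = C_a \setminus \{v\}$ for a non-cut vertex $v \neq a$, condition (ii) is immediate, since every $s \in S$ is adjacent to $C_a = C'_a \cup \{v\}$.

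The main obstacle is reconciling covering with the requirement that $S'$ be a genuine minimal separator, that is, that $N(C'_a)$ have a second full component. The single-vertex retraction makes (ii) automatic but makes $N(C_a \setminus \{v\})$ a minimal separator only when $v$ is the unique neighbor in $C_a$ of some vertex of $S$ (otherwise $C_a \setminus \{v\}$ is the only full component); conversely, retracting further to restore a second full component can leave separator vertices uncovered by $N(v)$, as happens when several separator vertices attach to $C_a$ through disjoint boundary sets. The heart of the argument, adapted from Berry, Bordat and Cogis, is therefore to show that one can always choose $C'_a$ and the single boundary vertex $v$ so that the separator vertices not already dominated by $C'_a$ all meet $v$ while $N(C'_a)$ retains a second full component; I expect to obtain this from an extremal choice of $C'_a$ (a maximal proper connected subset of $C_a$ containing $a$ whose neighborhood is a minimal separator) together with the defining property of $w$, or by an induction on $|C_a|$ that peels one such vertex at a time. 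Establishing this existence is the step I expect to be delicate.
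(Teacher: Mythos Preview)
The paper does not supply its own proof of this proposition; it is stated with a citation to Berry--Bordat--Cogis and used as a black box. So there is nothing in the paper to compare your argument against, and your write-up has to stand on its own.

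Structurally your plan is correct. The equivalence ``$S$ close to $a$ iff $S\subseteq N(a)$'' is right, the choice of $C=C_b$ (or more generally the component of $G\setminus(S'\cup N(v))$ containing $C_b$) is the natural witness, and your conditions (i) and (ii) are exactly what one must verify. You also correctly isolate the crux: the one-vertex retraction $C'_a=C_a\setminus\{v\}$ makes the covering condition $S\subseteq S'\cup N(v)$ automatic, but then $S'=N(C'_a)$ fails to be a \emph{minimal} separator precisely when every $s\in S$ still has a neighbour in $C_a\setminus\{v\}$ (since then $v$ is adjacent only to $C'_a$ and has no neighbour in any other component of $G\setminus S'$). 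Retracting further may restore a second full component but can break covering. This tension is real, not a technicality, and it is exactly the content of the completeness lemma in Berry--Bordat--Cogis.

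Your proposal stops at this point: you defer the existence of a suitable $(C'_a,v)$ to ``an extremal choice'' or an unspecified induction on $|C_a|$, and you say you ``expect'' it to go through. That is a genuine gap. The extremal heuristic you sketch (take $C'_a$ maximal among proper connected subsets of $C_a$ containing $a$ whose neighbourhood is a minimal separator) does not obviously interact with the single vertex $v$ needed for covering, and you give no argument linking the two. The Berry--Bordat--Cogis proof does not use an extremal argument; it makes a specific constructive choice of $v\in C_a$ and of $C'_a$ (and allows $C$ to be the component \emph{containing} $C_b$, rather than $C_b$ itself), which sidesteps the dichotomy you ran into. If you want a self-contained proof, you will need to either reproduce that construction or actually carry out the induction you allude to; as it stands the proposal is an accurate diagnosis of the problem but not a proof.
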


By Proposition~\ref{pro:berry_gen}, the minimal separators such that $C_a$ is a star can be generated from other minimal separators such that $C_a$ is a star.
Thus we can modify Berry's algorithm to prune the minimal separators for which $C_a$ is not a star.
We also use an optimization that if $G \setminus (N[a] \cup N(N[a]))$ has a component that is not a star, then there is no minimal separator $S$ such that $a$ is in a full component $C_a \subseteq N[a]$ of $S$ and all components of $G \setminus S$ are stars.

\subsection{Inducing Small Minimal Separators}
We use small minimal separators of $G$ to generate small minimal separators of $G[X]$.

\begin{proposition}
Let $G$ be a graph, $X \subseteq V(G)$ and $S \in \MS_k(G[X])$.
There exists a minimal separator $S' \in \MS_{k+|N(X)|}(G)$ such that $S = S' \cap X$.
\end{proposition}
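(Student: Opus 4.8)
The plan is to build $S'$ by augmenting $S$ with (a subset of) the boundary $N(X)$ and then extracting a subset-minimal separator. First I would fix the two vertices that $S$ separates inside $G[X]$: since $S \in \MS(G[X])$, the Lemma cited above guarantees at least two full components, so I can pick full components $C_a, C_b \in \CC(G[X \setminus S])$ whose neighborhood in $G[X]$ equals $S$, and choose $a \in C_a$, $b \in C_b$.

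Next I would show that $S \cup N(X)$ is an $a,b$-separator of the whole graph $G$. The key observation is that, by definition of $N(X)$, every edge leaving $X$ has its endpoint outside $X$ in $N(X)$; hence in $G \setminus (S \cup N(X))$ there are no edges between $X \setminus S$ and $V(G) \setminus (X \cup N(X))$, so the restriction of the components of $G \setminus (S \cup N(X))$ to $X \setminus S$ coincides with $\CC(G[X \setminus S])$. In particular $a$ and $b$ lie in different components, so $S \cup N(X)$ separates them in $G$. I would then let $S'$ be any subset-minimal $a,b$-separator of $G$ contained in $S \cup N(X)$; this is a minimal separator, so $S' \in \MS(G)$, and $|S'| \le |S| + |N(X)| \le k + |N(X)|$, giving $S' \in \MS_{k+|N(X)|}(G)$.

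It remains to prove $S' \cap X = S$, which is where the real work lies. One inclusion is immediate: $S' \subseteq S \cup N(X)$ and $N(X) \cap X = \emptyset$ force $S' \cap X \subseteq S$. The hard part is the reverse inclusion $S \subseteq S'$, and here I would use the full-component property. Because $S' \cap X \subseteq S$, neither $C_a$ nor $C_b$ meets $S'$, so each survives as a connected set in $G \setminus S'$, placing all of $C_a$ in the component of $a$ and all of $C_b$ in the component of $b$. Now suppose some $s \in S$ were omitted from $S'$. Since $S$ is the common neighborhood of both $C_a$ and $C_b$ in $G[X]$, the vertex $s$ has a neighbor in $C_a$ and a neighbor in $C_b$; if $s \notin S'$ it would join the component of $a$ to the component of $b$ in $G \setminus S'$, contradicting that $S'$ separates $a$ from $b$. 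Hence $S \subseteq S'$, and together with $S \subseteq X$ this yields $S \subseteq S' \cap X$, completing $S' \cap X = S$.

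I expect the construction of the separator $S \cup N(X)$ and the size bound to be routine; the main obstacle is the reverse inclusion $S \subseteq S'$, whose proof relies essentially on $S$ being a \emph{minimal} separator of $G[X]$ (so that it is the common neighborhood of its two full components) rather than merely an arbitrary separator.
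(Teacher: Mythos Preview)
Your proof is correct. The paper takes a different, iterative route: it observes that $S$ is already a minimal $a,b$-separator in $G[Y]$ for $Y = V(G) \setminus N(X)$, and then adds the vertices of $N(X)$ one by one, inserting a vertex $v$ into the growing separator precisely when $v$ is reachable from both $a$ and $b$ in the current graph with the current separator removed. This maintains a minimal $a,b$-separator throughout and ends with some $S'$ satisfying $S \subseteq S' \subseteq S \cup N(X)$.

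Your argument reaches the same conclusion $S \subseteq S' \subseteq S \cup N(X)$ but in one shot: you first overshoot with the (non-minimal) separator $S \cup N(X)$, prune to a minimal $S'$, and then use the full-component property of $S$ in $G[X]$ to rule out that any $s \in S$ was dropped. The paper's incremental construction hides exactly this reasoning inside each step (a vertex of $S$ is never removed because it always touches both full components), whereas you make it explicit once at the end. Your version is slightly cleaner as an existence proof and makes transparent where minimality of $S$ is used; the paper's version is more explicitly constructive. Both are equally valid.
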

\begin{proof}
Let $S$ be a minimal $a,b$-separator in $G[X]$.
Let $Y = V(G) \setminus N(X)$.
Now $S$ is a minimal $a,b$-separator in $G[Y]$.
Let us add each vertex $v \in (V(G) \setminus Y)$ to $G[Y]$ one by one.
If $v$ can be reached from both $a$ and $b$ in $G[Y \cup \{v\} \setminus S]$, then $S \cup \{v\}$ is a minimal $a,b$-separator of $G[Y \cup \{v\}]$, having the same full components in $G[Y \cup \{v\}]$ as $S$ has in $G[Y]$.
Otherwise $S$ is a minimal separator of $G[Y \cup \{v\}]$.
With this process we obtain a minimal separator $S' \in \MS(G)$ with $S \subseteq S' \subseteq S \cup N(X)$.
\end{proof}

\subsection{Tree Elimination}
We adapt the algorithm for computing the treedepth of a tree in linear time~\cite{DBLP:journals/ipl/Schaffer89} to locally kernelize tree subgraphs of the input graph.
Treedepth can be formulated as a problem of finding a vertex ranking $c : V(G) \rightarrow \{1, \ldots, \td(G)\}$ such that if $c(v) = c(u)$ for distinct vertices $u,v \in V(G)$, then all paths between $u$ and $v$ contain a vertex $w$ with $c(w) > c(v)$.
Now, for a vertex $v$ we can define the set of ranks $c^+(v)$ that can be ``seen'' from $v$ as the ranks $c(u)$ (including $u = v$) such that there is a path from $v$ to $u$ such that all vertices $w$ of the path have $c(w) \le c(u)$.
Let us say that a set of ranks $R$ is smaller than a set of ranks $R'$ if the maximum element of $R \setminus R'$ is smaller than the maximum element of $R' \setminus R$.

\begin{lemma}[\cite{DBLP:journals/ipl/Schaffer89}]
Let $T$ be a tree and $v \in V(T)$.
A vertex ranking $c$ of $T$ such that $c^+(v)$ is the smallest possible can be computed in linear time.
\end{lemma}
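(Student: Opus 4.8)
The plan is to root $T$ at $v$ and compute the optimal ranking by a single sweep from the leaves towards $v$, following Schäffer. For a vertex $u$ let $T_u$ be the subtree rooted at $u$, and for a ranking $c$ of $T_u$ let $c^+(u)$ denote the set of ranks visible from $u$ inside $T_u$; this agrees with the quantity in the statement when $u = v$ and $T_u = T$. First I would argue that the order on rank sets has a least element $L(u)$ among all valid rankings of $T_u$ (a leaf giving $\{1\}$), and that the $L(u)$ obey a recurrence that is evaluated greedily; applying the recurrence at $v$ then produces the claimed ranking and its set $c^+(v)$.

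The first real step is the recurrence. If $w_1, \ldots, w_d$ are the children of $u$, then for any valid ranking
$$c^+(u) = \{c(u)\} \cup \bigcup_{i=1}^{d} \{\, r \in c^+(w_i) : r > c(u) \,\},$$
since a vertex $x \in T_{w_i}$ is visible from $u$ exactly when it is visible from $w_i$ and $c(u) \le c(x)$. Because $u$ lies on every path leaving $T_{w_i}$, validity across children forces $c(u)$ to differ from every rank occurring in some $c^+(w_i)$ and to exceed every rank occurring in at least two of them; conversely, any $c(u)$ meeting these two conditions yields a valid ranking of $T_u$. I would then show that lowering $c(u)$ never increases $c^+(u)$ in the given order: passing to a smaller valid value only deletes the element $c(u)$ together with some smaller ranks and inserts the smaller value, so the largest element of the symmetric difference sits on the side of the larger choice. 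Hence the optimal $c^*(u)$ is the least integer avoiding all child ranks and exceeding all doubly occurring ones, and $L(u)$ is obtained by substituting $c^*(u)$ and the $L(w_i)$ into the displayed formula.

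The heart of the argument — and the step I expect to be the main obstacle — is to prove that the combine operation is monotone in each child set with respect to the order, so that $L(u) = \min_c c^+(u)$ follows by induction on $T_u$: for any valid $c$ the induction hypothesis gives $c^+(w_i) \ge L(w_i)$, and monotonicity together with the optimality of $c^*(u)$ then gives $c^+(u) \ge L(u)$, while achievability of $L(u)$ is exactly the converse construction above. The difficulty is that lowering a child set can create new coincidences between children and thereby force the greedy $c(u)$ upward, so the proof must show that every such forced increase is paid for by ranks simultaneously removed from the children. I would establish this by comparing the two resulting sets from their largest elements downward, using the characterization of the order by the largest differing element.

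Finally, for the running time I would store each $L(u)$ as a sorted list. Since every valid ranking uses ranks in $\{1, \ldots, \td(T)\}$ and $\td(T) = O(\log n)$ for trees, each set has $O(\log n)$ elements, and the merge together with conflict detection at $u$ costs $O(\sum_i |L(w_i)|)$. A charging argument in the style of Schäffer — charging each child rank that is absorbed below $c^*(u)$ to its disappearance — bounds the total work by $O(n)$, and the optimal ranking is recovered by recording $c^*(u)$ at each vertex during the sweep.
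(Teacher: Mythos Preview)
The paper does not give a proof of this lemma: it is quoted verbatim from Sch\"affer's paper and used as a black box in the tree-elimination procedure that follows. There is therefore nothing to compare against; your sketch is essentially a reconstruction of Sch\"affer's original argument (root at $v$, compute the visible-rank set bottom-up, pick the least admissible rank at each internal vertex, and prove monotonicity of the combine step), which is exactly the intended reference.

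One small wording issue: when you lower $c(u)$ from $r_1$ to $r_2<r_1$, the set $c^+(u)$ \emph{gains} the child ranks lying strictly between $r_2$ and $r_1$ rather than losing them, so your phrase ``deletes \ldots\ some smaller ranks'' is backwards. Your conclusion is still correct, since $r_1$ is the largest element of the symmetric difference and it lies on the side of the larger choice; just fix the description of which ranks move in which direction.
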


Our local kernelization algorithm finds a subset-maximal subgraph $G[T]$ such that $G[T]$ is a tree and is attached to the rest of the graph only on a single vertex $v \in V(T)$, i.e., $N(V(G) \setminus T) = \{v\}$ and then computes a vertex ranking of $G[T]$ such that $c^+(v)$ is the smallest possible.
Then the subgraph $G[T]$ is replaced with a subgraph determined by $c^+(v)$.
This is done by first replacing $v$ with a path $v = w_1, w_2, \ldots, w_{|c^+(v)|}$ of length $|c^+(v)|$, attached to the rest of the graph only on $v$.
Then, a clique with $c^+(v)(i)-1$ vertices is attached to each vertex $w_i$ of the path, where $c^+(v)(i)$ is the $i$-th smallest rank in $c^+(v)$.
Now, $G[T]$ has been replaced with a construction consisting of $|c^+(v)|$ vertex-disjoint cliques whose sizes are the ranks in $c^+(v)$.
The size of this construction is $\sum_{r \in c^+(v)} r = O(\td(G[T])^2)$ vertices.

\begin{proposition}
Let $G$ be a graph and $G'$ be a graph obtained from $G$ by applying the above described procedure.
It holds that $\td(G) = \td(G')$.
\end{proposition}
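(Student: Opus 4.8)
The plan is to localize the statement to the single vertex $v$ at which the tree $G[T]$ is attached, prove a gluing lemma that reduces the treedepth of a one-vertex union to the interaction of the two sides' visible-rank sets $c^+(v)$, and then argue that $G[T]$ and its replacement gadget $\mathrm{Gad}$ behave identically through this interaction because they share the same minimal such set.

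First I would set $R=(V(G)\setminus T)\cup\{v\}$, so that $G$ is the union of $G[R]$ and $G[T]$ sharing only $v$ and with no other edges between them (this is exactly the condition $N(V(G)\setminus T)=\{v\}$), and likewise $G'$ is the union of $G[R]$ and $\mathrm{Gad}$ along $v$. It therefore suffices to show that replacing $G[T]$ by $\mathrm{Gad}$ preserves treedepth. The central tool is a gluing lemma: if $X$ and $Y$ share only $v$ and have no edges between $X\setminus\{v\}$ and $Y\setminus\{v\}$, then a rank function $c$ is a valid vertex ranking of the union iff $c|_X$ and $c|_Y$ are valid and $c^+_X(v)\cap c^+_Y(v)=\{c(v)\}$. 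I would prove this by inspecting an equal-rank pair $u_1\in X$, $u_2\in Y$: every path between them runs through $v$ and splits into a path $u_1\to v$ in $X$ and a path $v\to u_2$ in $Y$, and such a path avoids all higher ranks exactly when the common rank $c(u_1)=c(u_2)$ lies in both $c^+_X(v)$ and $c^+_Y(v)$; the root rank $c(v)$ always lies in both but causes no conflict, being realized only by $v$.

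Next I would record the two facts that pin down the role of $c^+(v)$. Since any path leaving $v$ begins at $v$, no rank strictly below $c(v)$ is visible, so $\min c^+(v)=c(v)$; and since the globally largest rank is reachable from $v$ by any path, it is always visible, so for an optimal ranking $\max c^+(v)=\max_u c(u)=\td$. Hence the $\le$-minimal set $c^+(v)$ produced by Sch\"affer's lemma for $G[T]$ is attained at depth $\td(G[T])$ with top element $\td(G[T])$. I would then check, using the canonical ranking of $\mathrm{Gad}$ that gives the apex $w_i$ the $i$-th element $r_i$ of $c^+(v)$ and fills each attached clique with the ranks below $r_i$, that $\mathrm{Gad}$ has minimal visible set exactly $c^+(v)$ and treedepth $\td(G[T])$: the nested cliques on the path force precisely the ranks $r_1<\dots<r_m$ to propagate to $w_1=v$ and nothing else.

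With these in hand I would prove both inequalities by transporting an optimal ranking across the replacement: take an optimal ranking of $G$ (resp.\ $G'$), normalize its restriction to the tree (resp.\ gadget) side to the canonical minimal-profile ranking at root rank $\min c^+(v)$ and depth $\td(G[T])\le\td(G)$, and invoke the gluing lemma to certify that the result, combined with a suitably re-ranked $G[R]$, is still valid and no deeper. The main obstacle is exactly this normalization, and it is where the tree hypothesis is essential: sharing the minimal set $c^+(v)$ is \emph{not} by itself enough to force equal treedepth. For instance $K_4$ and the path-of-cliques gadget for $\{1,2,3,4\}$ have the same minimal set, yet gluing each to a second $K_4$ at a vertex gives treedepth $4$ in the first case (the shared vertex can be the common apex of both cliques) and at least $5$ in the second. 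The crux is thus to show that for a \emph{tree} side one may always assume, in some optimal ranking of the union, that the tree attains its minimal visible set $c^+(v)$ at its root rank; granting this, the gadget reproduces the behavior verbatim and the two treedepths coincide. I would establish this normalization by a bottom-up exchange argument mirroring Sch\"affer's computation of $c^+(v)$, pushing the visible set on the tree side down to the minimum subtree by subtree while re-routing, at no extra depth, the ranks used on $G[R]$ — the step that genuinely relies on $G[T]$ being a tree rather than an arbitrary graph with the same profile.
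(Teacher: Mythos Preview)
Your gluing lemma is correct and cleanly localizes the problem to the visible sets at $v$. The gap is the normalization step: you claim that in some optimal ranking of the union the tree side can be made to realize its minimal profile $c^+_o(v)$, re-ranking $G[R]$ as needed. This is false. Take $G[R]=K_n$ with $n>\max c^+_o(v)$ and any tree $G[T]$ with $|c^+_o(v)|\ge 2$; a concrete instance is the tree in which $v$ has a single neighbour $a$, $a$ has two further neighbours $b_1,b_2$, and each $b_i$ has two pendant leaves, for which Sch\"affer's computation gives $c^+_o(v)=\{1,3\}$. Then $\td(G)=n$, every depth-$n$ ranking of the clique uses all ranks $1,\dots,n$, and since $v$ is adjacent to every other clique vertex, $c^+_R(v)=\{1,\dots,n\}$ no matter how $K_n$ is ordered. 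If the tree side attained $c^+_T(v)=c^+_o(v)$ the intersection $c^+_R(v)\cap c^+_T(v)$ would be all of $c^+_o(v)$, not the singleton $\{c(v)\}$, so your own gluing condition fails; no ``re-routing'' on $G[R]$ can avoid this. (Incidentally, your $K_4$ illustration is also off: ranking $v$ highest gives $c^+(v)=\{4\}$, so the minimal visible set of $K_4$ is $\{4\}$, not $\{1,2,3,4\}$.)

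The paper sidesteps normalization by proving the right two-sided flexibility statement instead: for \emph{every} rank set $R$ with $R\ge c^+_o(v)$ in the Sch\"affer order, both $G[T]$ and the gadget admit a ranking whose visible set at $v$ is contained in $R$ (hence of depth at most $\max R$). Given an optimal ranking of $G$, one reads off the set $S=c^+_T(v)$ the tree actually attains there, notes $S\ge c^+_o(v)$ by minimality, and re-ranks only the gadget side to have visible set inside $S$; the gluing condition is then automatic and the ranking on $R\setminus\{v\}$ is untouched. The converse direction is symmetric once one checks that the gadget, too, has $c^+_o(v)$ as its minimal profile. The peel-off-the-maximum induction you allude to is exactly what establishes this flexibility---but it must be aimed at matching an arbitrary target $S\ge c^+_o(v)$, not at driving the tree side down to $c^+_o(v)$ itself.
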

\begin{proof}
Assume that the procedure was applied on a single tree $G[T]$ with $N(V(G) \setminus T) = \{v\}$.
Let $c_o^+(v)$ be the smallest possible set of ranks seen from $v$ in $G[T]$.

First, consider a vertex ranking $c$ of $G$ and let us construct a vertex ranking $c'$ of $G'$.
Let $c_t^+(v)$ be the set of ranks on $c$ seen from $v$ in $G[T]$.
If the maximum value of $c_t^+(v)$ is the same as the maximum value of $c_o^+(v)$, then let $c'(w_{|c_o^+(v)|})$ be this value.
The clique attached to $w_{|c_o^+(v)|}$ can now be arbitrarily ranked with smaller values.
Now we can remove the maximum value from both $c_t^+(v)$ and $c_o^+(v)$ and remove the vertex $w_{|c_o^+(v)|}$ and the attached clique and continue by induction.
If the maximum value of $c_t^+(v)$ is different than the maximum value of $c_o^+(v)$, then it must be higher than the maximum value of $c_o^+(v)$ because $c_o^+(v)$ is the smallest.
In that case, we can rank $v$ with the maximum value, and now clearly the rest of the structure of cliques can be ranked with smaller values.

Then, consider a vertex ranking $c'$ of $G'$ and let us construct a vertex ranking $c$ of $G$.
Let $c_t^{'+}(v)$ be the set of ranks on $c'$ seen from $v$ in $G'[T']$, where $T'$ is the set of vertices which replaced $T$.
Our strategy is to show that (1) $c_t^{'+}(v)$ is not smaller than $c_o^+(v)$ and (2) given any set of ranks $R$ not smaller than $c_o^+(v)$, we can construct a ranking $c$ of $G[T]$ such that $c^+(v) \subseteq R$.
In particular, we can construct a ranking for the induced subgraph $G[T]$ which does not add any ranks to the set of seen ranks from $v$, and thus this ranking can be plugged into the rest of the ranking for $G$.
To show (1), if the maximum value of $c_t^{'+}(v)$ is higher the maximum value of $c_o^+(v)$ we are done.
Otherwise, the maximum value must be taken by the clique attached to $w_{|c_o^+(v)|}$.
Let us then remove the maximum value from both $c_t^{'+}(v)$ and $c_o^+(v)$ and remove $w_{|c_o^+(v)|}$ and the attached clique and proceed by induction.
To show (2), let $R$ be a set of ranks not smaller than $c_o^+(v)$ and let us construct a ranking $c$ of $G[T]$ such that $c^+(v) \subseteq R$.
If the maximum value of $R$ is higher than the maximum value of $c_o^+(v)$ then let $c(v)$ be the maximum value, now $c^+(v) = \{c(v)\}$ and the rest of the tree can be ranked with the optimal ranking.
Otherwise find the subtree whose root in the optimal ranking of $G[T]$ has the maximum value of $c_o^+(v)$, rank the subtree with the optimal ranking, remove the subtree, and remove the maximum value from both $R$ and $c_o^+(v)$ and proceed by induction.
\end{proof}

The local kernelization and the reconstruction can be implemented in $O((n + m) \log n)$ time.

\subsection{Generalized Shared Neighborhood Rule}
In~\cite{DBLP:conf/iwpec/KobayashiT16} it was shown that if there are non-adjacent vertices $u,v \in V(G)$ with $|N(v) \cap N(u)| \ge k$, where $k$ is an upper bound for treedepth, then an edge can be added between $u$ and $v$.
We generalize this rule.

\begin{proposition}
Let $u$ and $v$ be non-adjacent vertices in a graph $G$.
If the minimum vertex cut between $u$ and $v$ is at least $k$ and $\td(G) \le k$, then $\td(G) = \td(G')$, where $G'$ is a graph obtained from $G$ by adding an edge between $u$ and $v$.
\end{proposition}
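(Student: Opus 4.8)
The plan is to prove the nontrivial inequality $\td(G') \le \td(G)$; the reverse inequality $\td(G) \le \td(G')$ is immediate, since $G$ is a subgraph of $G'$ and treedepth is monotone under taking subgraphs. For the hard direction I would fix an optimal \emph{treedepth decomposition} (elimination forest) $F$ of $G$, that is, a rooted forest on $V(G)$ of depth $\td(G) \le k$ in which the two endpoints of every edge of $G$ stand in an ancestor--descendant relationship (this is the forest counterpart of the vertex ranking used above). Adding the edge $uv$ keeps $F$ a valid decomposition \emph{provided} $u$ and $v$ are comparable in $F$, since the edge set of $G'$ is that of $G$ together with $uv$. Hence it suffices to show that $u$ and $v$ are comparable, and then $F$ witnesses $\td(G') \le \mathrm{depth}(F) = \td(G)$.

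So suppose toward a contradiction that $u$ and $v$ are incomparable in $F$. Since the minimum $u,v$-cut is at least $k \ge 1$, the vertices $u$ and $v$ are connected in $G$ and therefore lie in the same tree of $F$; let $w$ be their lowest common ancestor, which is then a proper ancestor of both. The first key step is to show that the set $A$ of ancestors of $w$ (the root-to-$w$ path, including $w$) is a $u,v$-separator of $G$: because every edge of $G$ joins comparable vertices, no edge of $G \setminus A$ can connect two distinct subtrees hanging off $A$, and $u$ and $v$ lie in the subtrees rooted at two \emph{distinct} children of $w$. Thus deleting $A$ disconnects $u$ from $v$, and $A$ contains neither $u$ nor $v$.

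The second key step is the size bound. Since $w$ is a proper ancestor of $u$, we have $|A| = \mathrm{depth}(w) \le \mathrm{depth}(u) - 1 \le \td(G) - 1 \le k - 1$. Thus $A$ is a $u,v$-separator of size strictly less than $k$, contradicting the hypothesis that the minimum $u,v$-cut is at least $k$. Therefore $u$ and $v$ are comparable in $F$, the new edge $uv$ respects $F$, and the claim follows.

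I expect the main obstacle to be the first key step, namely formally verifying that $A$ separates $u$ from $v$. The cleanest route is the invariant that every edge of $G$ joins an ancestor to a descendant, so each connected component of $G \setminus A$ is contained in a single subtree rooted at some child (lying outside $A$) of a vertex of $A$; the components containing $u$ and $v$ are distinct precisely because $w = \mathrm{LCA}(u,v)$ and $u,v$ are incomparable. If one prefers to argue entirely within the vertex-ranking formulation, the analogous statement is that whenever $c(u) = c(v) = j$ in an optimal ranking, the neighborhood $N(K)$ of the rank-$\le j$ connected set $K$ containing $u$ is a $u,v$-separator whose elements all have ranks exceeding $j$ and, by the ranking condition applied to paths through $K$, pairwise distinct; hence $|N(K)| \le k - j < k$, yielding the same contradiction.
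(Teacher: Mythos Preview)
Your proof is correct and follows essentially the same approach as the paper: assume $u$ and $v$ are incomparable in an optimal treedepth decomposition and exhibit a $u,v$-separator of size at most $k-1$ consisting of ancestors, contradicting the hypothesis. The only cosmetic difference is that the paper takes the proper ancestors of $v$ directly rather than the ancestors of $\mathrm{LCA}(u,v)$, but this is the same idea (your $A$ is a subset of the paper's separator).
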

\begin{proof}
Consider a treedepth decomposition of $G$ with depth at most $k$ such that $u$ is not an ancestor of $v$ and $v$ is not an ancestor of $u$.
Now, the ancestors of $v$ form a vertex cut between $u$ and $v$ with size at most $k-1$.
Therefore, in any treedepth decomposition of $G$ with depth at most $k$ either $v$ is an ancestor of $u$ or $u$ is an ancestor of $v$.
\end{proof}

\end{document}